\tikzset{diamond state/.style={draw,diamond}}
\newtheoremstyle{theoremdd}% name of the style to be used
  {\topsep}% measure of space to leave above the theorem. E.g.: 3pt
  {\topsep}% measure of space to leave below the theorem. E.g.: 3pt
  {\itshape}% name of font to use in the body of the theorem
  {0pt}% measure of space to indent
  {\bfseries}% name of head font
  {.}% punctuation between head and body
  { }% space after theorem head; " " = normal interword space
  {\thmname{#1}\thmnumber{ #2}\textnormal{\thmnote{ (#3)}}}
\theoremstyle{theoremdd}
\newtheorem{theorem}{Theorem}
\newtheorem{lemma}{Lemma}
\newtheorem{assumption}{Assumption}
\newtheorem{remark}{Remark}
\newcommand{\bm}[1]{\boldsymbol{#1}} % vectors
\newcommand{\set}[1]{\mathcal{#1}} % sets
\newcommand{\ie}{\textit{i.e.,~}} % i.e.,
\newcommand{\eg}{\textit{e.g.,~}} % i.e.,
\newcommand{\inneighbor}[1]{\set{N}_{#1}^{\texttt{in}}}
\newcommand{\outneighbor}[1]{\set{N}_{#1}^{\texttt{out}}}
\newcommand{\indegree}[1]{d_{#1}^{\texttt{in}}}
\newcommand{\outdegree}[1]{d_{#1}^{\texttt{out}}}
\newcommand{\ouralgorithm}{{\fontsize{10.5pt}{10.5pt}\selectfont \textsc{pp-acdc}}}
\title{\LARGE \bf
{Average Consensus with Dynamic Compression\\in Bandwidth-Limited Directed Networks}}
\author{% <-this % stops a space
Evagoras Makridis$^{1,*}$, Gabriele Oliva$^{2}$, Apostolos I. Rikos$^{3}$, and Themistoklis Charalambous$^{1,4}$
% <-this % stops a space
\thanks{$^1$Department of Electrical and Computer Engineering, School of Engineering, University of Cyprus, Nicosia, Cyprus.}
\thanks{$^2$Department of Engineering, University Campus Bio-Medico of Rome, Via Alvaro del Portillo, 21 - 00128 Roma, Italy.} 
\thanks{$^3$Artificial Intelligence Thrust of the Information Hub, The Hong Kong University of Science and Technology (Guangzhou), Guangzhou, China, and Department of Computer Science and Engineering, The Hong Kong University of Science and Technology, Clear Water Bay, Hong Kong. E-mail: {\tt~apostolosr@hkust-gz.edu.cn}.}
\thanks{$^4$Department of Electrical Engineering and Automation, School of Electrical Engineering, Aalto University, Espoo, Finland.}
\thanks{$^*$Corresponding author. Email: {\tt makridis.evagoras@ucy.ac.cy}.}
\thanks{This work has been partly funded by MINERVA, a European Research Council (ERC) project funded under the European Union's Horizon 2022 research and innovation programme (Grant agreement No. 101044629).}
}
\begin{document}

\maketitle
\thispagestyle{empty}
\pagestyle{empty}

%%%%%%%%%%%%%%%%%%%%%%%%%%
%%%%%%%%%%%%%%%%%%%%%%%%%%
\begin{abstract}
In this paper, the average consensus problem has been
considered for directed unbalanced networks under finite bit-rate communication. We propose the Push-Pull Average Consensus algorithm with Dynamic Compression (\ouralgorithm) algorithm, a distributed consensus algorithm that deploys an adaptive quantization scheme and achieves convergence to the exact average without the need of global information. 
A preliminary numerical convergence analysis and simulation results corroborate the performance of \ouralgorithm.
\end{abstract}

%\begin{keywords}
%average consensus, quantized communication, zooming, directed graphs, push-pull surplus consensus
%\end{keywords}

%%%%%%%%%%%%%%%%%%%%%%%%%%%%%%%%%%%%%%%%%
%%%%%%%%%%%%%%%%%%%%%%%%%%%%%%%%%%%%%%%%%
\section{Introduction}\label{sec:introduction}
%%%%%%%%%%%%%%%%%%%%%%%%%%%%%%%%%%%%%%%%%
%%%%%%%%%%%%%%%%%%%%%%%%%%%%%%%%%%%%%%%%%

%%%%%%%%%%%%%%%%%%%%%%%%%%%%%%%%%%%%%%%%%
Distributed consensus algorithms are essential components of modern applications that involve the cooperation of a group of networked agents (often referred to as \emph{multi-agent systems}) such as robotic networks, distributed power networks, and wireless sensor networks. Some of these applications include distributed optimization \cite{ carnevale2023distributed,maritan2024fully}, distributed estimation and control \cite{ makridis2024fully, fioravanti2024distributed}, and distributed learning \cite{bastianello2024robust}. In such problems, the agents in a network aim at reaching agreement on a common decision by iteratively updating their values based on information received by their immediate neighbors via information exchange. When the agreed common value is the average of the initial values of all agents in the network, we say that the agents reach \emph{average consensus}. Although there exist several average consensus schemes considering network abnormalities (such as packet delays or drops)~\cite{SYS-016} and malfunctioning agents (e.g., curious, malicious, or faulty agents)~\cite{CHARALAMBOUS2024}, they require agents to be able to send and receive real values with infinite precision.

%\subsection{Motivation}

In multi-agent systems where the communication is often established over wireless channels, several constraints, such as limited bandwidth, power, and memory, arise. Such limitations often require that the information that is exchanged between agents be quantized. Quantization enables information compression such that a number in a continuous set (\ie real-valued number $x\in\mathbb{R}$), can be mapped to a value in a finite set. This mapping is done with a quantizer, denoted by $Q(\cdot)$. In this work, we define the quantizer given a positive integer number of bits $b<\infty$ available for quantization, the step-size $\Delta$ which defines the spacing between the quantization intervals, and the quantizer's symmetry point (or midpoint) $\sigma$ (which is not necessarily at $0$) as follows:
\begin{align}\label{eq:quantizer}
    Q\big(x,b,\Delta,\sigma\big) &= \begin{cases}
        \hfil \sigma + \left(2^{b-1} - 1\right)\Delta, & \text{if}~x\geq\sigma + \bar{x},\\
        \hfil \sigma - \left(2^{b-1} - 1\right)\Delta, & \text{if}~x\leq\sigma - \bar{x},\\
        \hfil \sigma + \Delta \bigl \lfloor \frac{x-\sigma}{\Delta} \bigr \rceil, & \text{otherwise},
    \end{cases}
\end{align} 
where the quantization range limit is given by
\begin{align}
\bar{x}={\left(2^{b-1} - \dfrac{1}{2}\right)\Delta},
\end{align}
and $\lfloor x \rceil$ represents the round function for a value $x$. %\ie
% \begin{align}
%     \lfloor x \rceil &= \begin{cases}
%         \lfloor x \rfloor, & \text{if } x - \lfloor x \rfloor < 0.5,\\
%         \lceil x \rceil, & \text{if } x - \lfloor x \rfloor \ge 0.5.
%     \end{cases}
% \end{align}
%\textcolor{red}{[Perhaps we can define $\lfloor x \rceil$ in the preliminaries.]}
An example of the input/output relation of a quantizer with $b=3$ and $\Delta=1$ is shown in Fig.~\!\ref{fig:quantizer}.
\begin{figure}[t]
    \centering
    \includegraphics[width=0.75\linewidth]{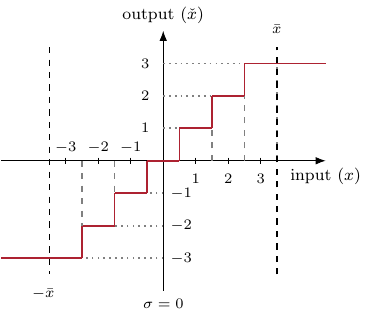}
    \caption{A uniform quantizer of $b=3$ bits with $\sigma=0$ and $\Delta=1$.}% \textcolor{red}{[Why the gray zones?]}}
    \label{fig:quantizer}
\end{figure}
%In essence, quantized information refers to continuous data that have been approximated or discretized into a finite set of values that fall within discrete levels. 
Although quantization reduces the precision of the original information, it enables mapping to a finite set of discrete values representable using bits. This mapping is necessary for transmissions over channels with limited bandwidth and transmitters with power limitations. Bandwidth limits the number of symbols sent per second, while power constraints limit the number of bits per symbol that the modulation scheme can reliably encode. Increasing the number of bits per symbol requires higher transmission power to maintain reliable decoding at the receiver in the presence of noise. Consequently, quantization encodes real-valued information into sequence of bits aligned with the modulation scheme and constrained by the channel capacity.

Communicating quantized values in multi-agent systems has attracted significant interest due its essential role in overcoming practical constraints like bandwidth and power limitations \cite{liu2024distributed,lin2024event,yan2025layered,doostmohammadian2025log,feng2025quantized}. However, in the context of distributed average consensus, quantized communication can negatively impact both the accuracy and overall performance of the system. Specifically, it affects the system behavior as follows. First, saturation occurs when the signal exceeds the quantization range, leading to large errors that can destabilize the system, if the consensus algorithm was originally designed for a non-quantized system. Second, as the system state gets closer to its target value, performance degradation arises because higher precision is required. Finally, each time an agent quantizes a value before sending it to its neighbors, some information is lost due to rounding and discretization. Over time, these small errors can accumulate, leading to biased consensus, \ie agents converge to a value that is far from the true average. These errors are even amplified when agents interact over a directed network, where the information flow is not balanced, since the errors may propagate asymmetrically within the network. Since this work focuses on average consensus in directed graphs, the discussion of related work will be limited to literature specifically addressing possibly unbalanced directed graphs.

To minimize quantization errors, agents can use a dynamic quantization strategy, such as zooming in / zooming out (ZIZO) \cite{Brockett2000TAC}, where the quantization step size changes based on the range of values being exchanged. However, since the links are not bidirectional, it is difficult to synchronize the nodes to simultaneously update the size of the quantization step. 
The majority of algorithms designed for quantized average consensus for unbalanced directed graphs achieve convergence in a probabilistic manner (see, \eg~\cite{Rikos2021TAC} and references therein), meaning that nodes reach consensus on the quantized average with probability one or through other probabilistic guarantees. Nevertheless, either residual quantization errors persist indefinitely or the average consensus is approximate, the convergence is slow and depends on the network size. With only a few exceptions~\cite{Zambieri2009ECC,song2022compressed,rikos2024finite}, the development of deterministic distributed strategies for achieving quantized average consensus in unbalanced directed graphs remains largely unexplored. Baldan and Zambieri~\cite{Zambieri2009ECC} proposed distributed algorithms that achieve average consensus with quantized transmissions and the algorithm's parameters can be chosen in a distributed fashion without knowing the number of agents composing the network. Since they only adopt the ZIZO strategy without any adaptation on the symmetry point $\sigma$ of the quantizer, there is a minimum number of bits required for the algorithm to converge which depends on the parameter choice. Rikos \emph{et al.}~\cite{rikos2024finite} proposed a distributed algorithm that is able to calculate the exact average of the initially quantized values (in the form of a quantized fraction) in a deterministic fashion after a finite number of iterations. However, the average consensus is approximate (due to the initial quantization), the number of bits for each message is not necessarily bounded by a certain value, and the convergence is slow and depends on the network size. 

In this paper, our aim is to address the aforementioned challenges and thus we propose a distributed average consensus algorithm for possibly unbalanced directed graphs under limited communication bandwidth. The algorithm integrates an adaptive quantization mechanism, enabling agents to converge to the \emph{exact} average without requiring global knowledge of the network (see \S\ref{sec:algo}). Our algorithm adopts a surplus consensus structure suited for unbalanced directed graphs and avoids the division operations used in ratio consensus methods (see \cite{SYS-016}), making it robust to quantization. We introduce a design that integrates dynamic quantization (zooming and midpoint shifting) directly into the consensus updates, ensuring robustness to finite-bit communication constraints and enabling the consensus error to be driven to zero. We analyze the convergence properties of the proposed method (see \S\ref{sec:algo}) and support our claims with simulations (see \S\ref{sec:simulations}) that demonstrate exact average consensus under tight communication constraints.

\section{Preliminaries}\label{sec:background}
%%%%%%%%%%%%%%%%%%%%%%%%%%%%%%%%%%%%%%%%%
%%%%%%%%%%%%%%%%%%%%%%%%%%%%%%%%%%%%%%%%%

%%%%%%%%%%%%%%%%%%%%%%%%%%%%%%%%%%%%%%%%%
\subsection{Mathematical Notation}
The sets of real, integer, and natural numbers are denoted as $\mathbb{R}$, $\mathbb{Z}$, and $\mathbb{N}$, respectively. The set of nonnegative integer numbers and the set of nonnegative real numbers are denoted as $\mathbb{Z}_{\geq0}, \mathbb{R}_{\geq 0}$. The nonnegative orthant of the $n$-dimensional real space $\mathbb{R}^n$ is denoted as $\mathbb{R}_{\geq 0}^{n}$. 
Matrices are denoted by capital letters, and vectors by small letters. The transpose of a matrix $A$ and a vector $x$ are denoted as $A^\top$, $x^\top$, respectively. The all-ones and all-zeros vectors are denoted by $\mathbf{1}$ and $\mathbf{0}$, respectively, with their dimensions being inferred from the context. The floor function of a real number $x$ is denoted as $\lfloor x \rfloor = \max \{ b \in \mathbb{Z} \mid b \le x \}$, and is defined as the greatest integer less than or equal to $x$. The ceiling function of a real number $x$ is denoted as $\lceil x \rceil = \min \{ b \in \mathbb{Z} \mid b \ge x \}$, and is defined as the least integer greater than or equal to $x$. %The round function of a real number $x$ is denoted by $\lfloor x \rceil$ where $\lfloor x \rceil = \lfloor x \rfloor$, if $x-\lfloor x\rfloor<0.5$; or $\lfloor x \rceil= \lceil x \rceil$, if $x-\lfloor x \rfloor \geq 0.5$.

\subsection{Network Model}
Consider a group of $n>1$ agents communicating over a time-invariant directed network. The interconnection topology of the communication network is modeled by a digraph $\set{G}=(\set{V}, \set{E})$. Each agent $v_j$ is included in the set of digraph nodes $\set{V}=\{v_1, \cdots, v_n\}$, whose cardinality is $n=\left|\set{V}\right|$. The interactions between agents are included in the set of digraph edges $\set{E} \subseteq \set{V} \times \set{V}$. The total number of edges in the network is denoted by $m=\left|\set{E}\right|$. A directed edge $\varepsilon_{ji} \triangleq (v_j, v_i) \in \set{E}$ indicates that node $v_j$ receives information from node $v_i$. The nodes that transmit information to node $v_j$ directly are called in-neighbors of node $v_j$, and belong to the set $\inneighbor{j}=\{v_i \in \set{V} \mid \varepsilon_{ji} \in \set{E}\}$. The number of nodes in the in-neighborhood set is called in-degree and is denoted by $\indegree{j} = \left|\inneighbor{j}\right|$. The nodes that receive information from node $v_j$ directly are called out-neighbors of node $v_j$, and belong to the set $\outneighbor{j}=\{v_l \in \set{V} \mid \varepsilon_{lj} \in \set{E}\}$. The number of nodes in the out-neighborhood set is called out-degree and is denoted by $\outdegree{j}= \left|\outneighbor{j}\right|$. Each node $v_j \in \set{V}$ has immediate access to its own local state, and thus we assume that the corresponding self-loop is available $\varepsilon_{jj} \in \set{E}$, although it is not included in the nodes' out-neighborhood and in-neighborhood. 
A directed path from $v_i$ to $v_l$ with a length of $t$ exists if a sequence of nodes $i \equiv l_0,l_1, \dots, l_t \equiv l$ can be found, satisfying $(l_{\tau+1},l_{\tau}) \in \mathcal{E}$ for $ \tau = 0, 1, \dots , t-1$. In $\mathcal{G}$ a node $v_i$ is reachable from a node $v_j$ if there exists a path from $v_j$ to $v_i$ which respects the direction of the edges. The digraph $\mathcal{G}$ is said to be strongly connected if every node is reachable from every other node.

%%%%%%%%%%%%%%%%%%%%%%%%%%%%%%%%%%%%%%%%%
\subsection{Distributed Average Consensus Problem}
%%%%%%%%%%%%%%%%%%%%%%%%%%%%%%%%%%%%%%%%%

Let us assume that at each time instant $k\in\mathbb{Z}_{\geq0}$ each node $v_j \in \mathcal{V}$ maintains a scalar\footnote{The results can be extended to the case where agents maintain vectorial states.} state $x_{j,k} \in \mathbb{R}$. In the distributed average consensus problem, the goal of the agents is to reach consensus to a value equal to the average of their initial states, which is defined as
\begin{align}\label{eq:ac_problem}
    x_{\text{ave}} := \dfrac{1}{n} \sum_{j=1}^{n} x_{j,0}.
\end{align}
Due to the absence of global knowledge at each agent, agents are required to execute an iterative distributed algorithm to eventually converge to the average consensus value, by means of local communication and computation. In particular, agents update their local states using information received from their immediate neighboring agents, through communication channels. 

In practice, however, the exchange of information is often restricted to be unidirectional (instead of bidirectional) as a consequence of diverse transmission power, interference levels, and communication ranges. Thus, although an agent $v_j$ may receive information from agent $v_i$, that does not necessarily imply that $v_j$ can also send information back to agent $v_i$. Consequently, the network topology is best modeled by a
directed graph (digraph).

%%%%%%%%%%%%%%%%%%%%%%%%%%%%%%%%%%%%%%%%%
\subsection{Average Consensus in Digraphs using Surplus Consensus}

A linear algorithm for achieving average consensus over directed and strongly connected networks was proposed in~\cite{cai2012average}. Each agent \( v_j \in \set{V} \) maintains a local state \( x_{j,k} \in \mathbb{R} \) and a surplus variable \( s_{j,k} \in \mathbb{R} \), initialized as \( x_{j,0} \in \mathbb{R} \), \( s_{j,0} = 0 \). At each iteration \( k \), agent \( v_j \) sends \( x_{j,k} \) and \( c_{lj}s_{j,k} \) to its out-neighbors \( v_l \in \outneighbor{j} \), where the weights \( c_{lj} \) are defined as:
\begin{align}\label{eq:c-weights}
    c_{lj}=\begin{cases}
    \frac{1}{1+\outdegree{j}}, & \text{if } v_l \in \outneighbor{j} \lor l = j,\\
    0, & \text{otherwise}.
    \end{cases}
\end{align}
Each agent \( v_j \) receives \( x_{i,k} \) and \( c_{ji}s_{i,k} \) from in-neighbors \( v_i \in \inneighbor{j} \), and updates its variables as:
\begin{subequations}\label{eq:ppac}
\begin{align}
x_{j,k+1} &= r_{jj} x_{j,k} + \gamma s_{j,k} + \sum_{v_i \in \inneighbor{j}} r_{ji} x_{i,k}, \\
s_{j,k+1} &= c_{jj} s_{j,k} + x_{j,k} - x_{j,k+1} + \sum_{v_i \in \inneighbor{j}} c_{ji} s_{i,k},
\end{align}
\end{subequations}
where \( \gamma > 0 \) is the \emph{surplus gain}\footnote{The choice of \( \gamma \) requires global knowledge of the network size.}, used to tune the convergence rate. The weights \( r_{ji} \), used to combine received \( x_{i,k} \), are given by:
\begin{align}\label{eq:r-weights}
    r_{ji} = \begin{cases}
    \frac{1}{1+\indegree{j}}, & \text{if } v_i \in \inneighbor{j} \lor j = i,\\
    0, & \text{otherwise}.
    \end{cases}
\end{align}

% \begin{remark}
% The weights $c_{lj}\geq0$ (often called \emph{``push" weights}) are assigned by transmitting agents according to their out-degree. Hence, they are required to know the number of their out-neighbors (see \cite{hadjicostis2015robust,charalambous2015distributed} for out-degree estimation methods and \cite{makridis2023utilizing} for out-degree computation methods using feedback links). These weights form a nonnegative column-stochastic matrix $C=\{c_{lj}\} \in \mathbb{R}_{+}^{n \times n}$, \ie $\mathbf{1}^{\top} C = \mathbf{1}^{\top}$.     
% \end{remark}

% \begin{remark}
% The weights $r_{ji}\geq0$ (often called \emph{``pull" weights}) are assigned by receiving agents according to their in-degree. The assignment of these weights is straightforward since each agent can acquire its in-degree by summing the number of incoming streams of information. These weights form a nonnegative row-stochastic matrix $R=\{r_{ji}\} \in \mathbb{R}_{+}^{n \times n}$, \ie $R\mathbf{1} = \mathbf{1}$.     
% \end{remark}

\begin{remark}
\emph{Push weights} \(c_{lj} \geq 0\) are assigned by the transmitting agents based on their out-degree, which is either estimated or computed (see~\cite{hadjicostis2015robust,charalambous2015distributed,makridis2023utilizing}). These weights form a column-stochastic matrix \(C \in \mathbb{R}_{\geq0}^{n \times n}\) with \(\mathbf{1}^\top C = \mathbf{1}^\top\).
\end{remark}

\begin{remark}
\emph{Pull weights} \(r_{ji} \geq 0\) are assigned by the receiving agents using their in-degree, which can be directly counted. These weights form a row-stochastic matrix \(R \in \mathbb{R}_{\geq0}^{n \times n}\) with \(R\mathbf{1} = \mathbf{1}\).
\end{remark}

The main idea behind \cite{cai2012average} is to keep the sum of agents' variables (state and surplus) time-invariant, and equal to the initial sum of agents' variable, \ie $\mathbf{1}^{\top}({\bm x}_k + {\bm s}_k) = \mathbf{1}^{\top} {\bm x}_0$ for all time $k\geq0$, where ${\bm x}_k$ and ${\bm s}_k$ are stacks of the state and surplus variables of all agents into column vectors. 

%\AR{AR: CONVERGENCE THEOREM OF \cite{cai2012average} ?}

%%%%%%%%%%%%%%%%%%%%%%%%%%%%%%%%%%%%%%%%%
%\subsection{Information Compression}
%%%%%%%%%%%%%%%%%%%%%%%%%%%%%%%%%%%%%%%%%

\section{Problem Formulation}\label{sec: problem}

Consider a communication network represented by a digraph $\mathcal{G} = (\mathcal{V}, \mathcal{E})$ comprising $n = |\mathcal{V}|$ nodes. 
We assume that the communication channels between nodes in our network $\mathcal{G}$ have limited bandwidth. 
At each time instant $k\geq0$ each node $v_j \in \mathcal{V}$ maintains a scalar state $x_{j,k} \in \mathbb{R}$. 
In this paper, we aim to design a distributed algorithm that enables nodes to reach consensus to a value equal to the average of their initial states defined as $x_{\text{ave}}$ in \eqref{eq:ac_problem}. 
Additionally, during the operation of our algorithm, nodes are required to communicate by exchanging quantized valued messages due to the limited bandwidth nature of the network.

%%%%%%%%%%%%%%%%%%%%%%%%%%%%%%%%%%%%%%%%%
\section{Quantized Average Consensus with\\ Adaptive Zooming and Midpoint Shifting}
%%%%%%%%%%%%%%%%%%%%%%%%%%%%%%%%%%%%%%%%%
\label{sec:algo}

Assume that each agent $v_j\in\set{V}$ is given an initial quantization step size $\Delta_0$, midpoint $\sigma_0$, and $b$ number of bits for quantizing the information that is to be sent to its out-neighbors $v_l\in\outneighbor{j}$, \ie state variable $\check{x}_{j,k}$, surplus variable $\check{s}_{j,k}$. Then each agent $v_j$ sends the following quantized messages to its out-neighbors $v_l\in\outneighbor{j}$:
\begin{subequations}\label{eq:quantized_values}
    \begin{align}
        \check{x}_{j,k} &= Q\big(x_{j,k},b,\Delta_k,\sigma_k\big),\\
        \check{s}_{j,k} &= Q \big(s_{j,k},b,\Delta_k,0\big),
        %\check{d}^{\texttt{out}}_j &= Q\big(\outdegree{j},b,\Delta^c\big),
    \end{align}
\end{subequations}
where $Q(\cdot)$ is identical for each node. Here it is important to note that, the quantizer that gives the quantized value $\check{s}_{j,k}$ for the surplus variable $s_{j,k}$, takes the same number of bits and quantization step size as the quantizer for $x_{j,k}$ with the only difference being the midpoint $\sigma$ which is always at $0$. 
A more detailed discussion regarding the midpoint $\sigma$ is provided in subsequent sections. 

Before introducing the zooming mechanism for adapting the quantization levels, we state the following assumptions.

\begin{assumption}\label{assum:1}
    All the agents $v_j\in\set{V}$ know an upper bound on the diameter of the network $\bar{D}\geq D$, where $D$ is the diameter of the network, defined as the longest shortest path between any two nodes (respecting the direction of the links).
\end{assumption}

\begin{assumption}\label{assum:2}
    All the agents $v_j\in\set{V}$ set their initial quantization step size to $\Delta_0$ and quantization midpoint to $\sigma_0$.
\end{assumption}

\begin{assumption}\label{assum:3}
    All the agents $v_j\in\set{V}$ are given the same surplus gain $\gamma>0$ and quantization zooming factor $\alpha>0$.
\end{assumption}

Assumption~\ref{assum:1} enables us to design max- and min-consensus protocols that are performed by the agents every $\bar{D}$ time steps. 
These protocols enable agents to take coordinated decisions for zooming out or zooming in the quantization level $\Delta_k$ of their quantizer, every $\bar{D}$ steps, with the aim of achieving faster and more accurate quantized average consensus, compared to fixed quantization strategies. 
Assumption~\ref{assum:1} is standard in the literature (see for example \cite{cady2015finite,charalambous2015distributed}). %manitara2017distributed 
However, one can employ distributed mechanisms to obtain the diameter of the network either via exact computation \cite{oliva2016distributed} or estimation \cite{garin2012distributed}. % cardoso2009probabilistic
Assumption~\ref{assum:2} enables nodes to maintain efficient communication during the operation of our algorithm over the bandwidth-limited communication links. Assumption~\ref{assum:3} is necessary for the operation of our algorithm enabling consistent coordination with quantized message exchange among nodes.

%\todo{REMOVE: With Assumptions~\ref{assum:1} and \ref{assum:2} hold true, we can design max-consensus procedures that are performed by the agents every $D$ time steps. 
%These procedures allow agents to take coordinated decisions for zooming out or zooming in the quantization level $\Delta_k$ of their quantizer, every $D$ steps, with the aim of achieving faster and more accurate quantized average consensus, compared to fixed quantization strategies.} 

%\textcolor{red}{[For instance of $\Delta=0.1$ and we have $3$ bits, we have $2^3=8$ intervals. To keep it centered in the origin we have $2^m-1=7$ intervals  (we discard one to get an odd number). So the range is $(2^m-1)\Delta-\Delta= (2^m-2)\Delta$ (because we need to subtract half of delta from the negative side and half from the positive side. Divided by 2 to get half of the range we have $\bar{x}=(2^{m-1}-1)\Delta$.]}

\subsection{Coordinated Quantization Zooming}
In what follows we will introduce a zooming mechanism for adapting the quantization levels for the state $x_{j,k}$ and surplus variable $s_{j,k}$. Recall that all agents have access to the same initial quantization step size $\Delta_0$ and number of bits for quantization $b$. Executing a max-consensus procedure which converges to the maximum value exchanged in the network after at most $D$ steps, agents are able to coordinate their zooming mechanism for adjusting the quantization step size $\Delta_k$ accordingly. In essence, we would like agents to zoom out their quantizers (increase $\Delta_k$) when at least one agent's value is out of range. Conversely, if all agents' values are within the range, they should either zoom in their quantizers (decrease $\Delta_k$) or keep them constant (keep $\Delta_k$). The quantization step size $\Delta_k$ is updated every $\bar{D}$ consensus iterations and remains fixed during each $\bar{D}$-step interval.

Before we formally define the coordination mechanism for quantization zooming, we let $\set{K}_{\bar{D}}$ denote the set of times when agents adjust their quantization step size, \ie
\begin{align}
    \set{K}_{\bar{D}} = \{ k \in \mathbb{Z}_{\geq 0} \mid k = \bar{D}h, \, h \in \mathbb{Z}_{\geq0} \}.    
\end{align}
Adjusting the quantization step size requires agents to coordinate in order to agree on their quantizers' operation (\eg zoom-in or zoom-out), and its adjustment on the step-size (\ie quantization precision). To adjust the quantization step size $\Delta_k$, each agent $v_j \in \set{V}$ maintains a variable ${\zeta}_{j,k}$ which is (re)set at each time $k \in \set{K}_{\bar{D}}$, as follows:
\begin{align}\label{eq:zeta}
    \zeta_{j,k} &= \begin{cases}
        \hfill 1, &  \text{if } x_{j,k} > \sigma_k + \bar{x}_k \lor x_{j,k} < \sigma_k - \bar{x}_k,\\
        \hfill -1, &  \text{if } \frac{\sigma_k - \bar{x}_k}{1+\alpha} < x_{j,k} < \frac{\sigma_k + \bar{x}_k}{1+\alpha},\\        
        \hfill 0, & \text{otherwise},
    \end{cases}
\end{align}
where $\alpha>0$ is the zooming factor. The variable $\zeta_{j,k}$ captures how well the value to be quantized fits within the current quantization range and classifies this alignment into three distinct regions: central (well within range), peripheral (near the range boundaries), and out-of-range (exceeding the range), thereby guiding dynamic adjustments to the quantization scale. For time instants where $k \notin \set{K}_{\bar{D}}$, each agent $v_j \in \set{V}$ updates its local variable $\zeta_{j,k}$ by means of local computation and communication, through a max-consensus procedure, as follows:
\begin{align}\label{eq:max_consensus}
        \zeta_{j,k+1} = \max_{v_i \in \inneighbor{j}\cup \{v_j\}}\big\{\zeta_{i,k}\big\}.
    \end{align}
By the end of ${\bar{D}}$ iterations of the above max-consensus procedure, agents are guaranteed to agree on the maximum value in the network \cite{giannini2013convergence}, \ie
\begin{align}\label{eq:max_zeta}
    \zeta_k &= \max_{v_j \in \set{V}} \big\{ \zeta_{j,k} \big\}.
\end{align}

Based on the above coordination scheme, at every $k\in\set{K}_{\bar{D}}$, agents synchronously adjust their quantization step-size for the next ${\bar{D}}$ steps, as
\begin{align}\label{eq:stepsize}
    \Delta_{k+1} = 
    \begin{cases}
        \hfill (1+\alpha)\Delta_k, & \text{if}~\zeta_k=1,\\
        \hfill \Delta_k/(1+\alpha), & \text{if}~\zeta_k=-1,\\
        \hfill \Delta_k, & \text{otherwise.}
    \end{cases}
\end{align}
Note that, for $k\notin \set{K}_{\bar{D}}$, agents only exchange and update their variables $\zeta_{j,k}$, respecting the communication topology, as in \eqref{eq:max_consensus}, while they hold the last updated quantization step size $\Delta_k$ unchanged, until the next $k\in\set{K}_{\bar{D}}$.

\subsection{Coordinated Quantization Midpoint Shifting}
To achieve arbitrarily large reduction of the average consensus error, we need to employ a midpoint shifting mechanism for adjusting the symmetry point of the agents' quantizers. In essence, we would like to design an update rule for the quantizers' midpoint value $\sigma_k$ that will track the actual average consensus value of the network in a distributed fashion. This will allow agents to further zoom in their quantizers such that they reach average consensus in higher precision. 

For shifting the quantizers' midpoint $\sigma_k$ in a distributed and coordinated manner, each agent $v_j \in \set{V}$ maintains two auxiliary variables $M_{j,k}$ and $\mu_{j,k}$ which are updated using max- and min-consensus, respectively. These variables are reset at each time $k \in \set{K}_{\bar{D}}$, as $M_{j,k}=\check{x}_{j,k}$ and $\mu_{j,k}=\check{x}_{j,k}$, and they are updated at $k \notin \set{K}_{\bar{D}}$ as follows:
\begin{subequations}\label{eq:max-min-consensus}
\begin{align}
    M_{j,k+1} &= \max_{v_i \in \inneighbor{j}\cup \{v_j\}}\big\{M_{i,k}\big\},\\
    \mu_{j,k+1} &= \min_{v_i \in \inneighbor{j}\cup \{v_j\}}\big\{\mu_{i,k}\big\}.
\end{align}
\end{subequations}
By the end of ${\bar{D}}$ iterations of the aforementioned max- and min-consensus procedures, agents compute the maximum and minimum quantized value in the network, \ie $M_{k}=\max_{v_j \in \set{V}}\{M_{j,k}\}$ and $\mu_{k}=\min_{v_j \in \set{V}}\{\mu_{j,k}\}$, respectively. Based on these values, they shift their quantizers' midpoint according to:
\begin{align}\label{eq:shifting}
    \sigma_k = \frac{1}{2}(M_k + \mu_k).
\end{align}

\subsection{Quantized Average Consensus}
At each time step $k\in\mathbb{Z}_{\geq0}$ and upon the quantization of the state and surplus variables with the coordinated alternating zooming, each agent $v_j$ sends to its out-neighbors the values in \eqref{eq:quantized_values}, and updates its variables with the corresponding quantized values it received from its in-neighbors, as follows:
\begin{subequations}
\begin{align}
x_{j,k+1} &= x_{j,k} + \gamma s_{j,k} + \!\!\! \sum_{v_i \in \inneighbor{j}} r_{ji}\check{x}_{i,k},\\
s_{j,k+1} &= s_{j,k} + x_{j,k} - x_{j,k+1} + \!\!\! \sum_{v_i \in \inneighbor{j}} c_{ji}\check{s}_{i,k},
\end{align}
\end{subequations}
where $c_{ji}$ and $r_{ji}$ are given in \eqref{eq:c-weights} and \eqref{eq:r-weights}, respectively. Note that, the quantized value $\check{s}_{i,k}$ is sent over the communication channel along with the out-degree of agent $v_i$, $d_i^{\texttt{out}}\in\mathbb{N}$. 
%\textcolor{red}{[Is it easy to send something else which, after quantization, is exactly the unquantized $d_i^{\texttt{out}}$? if the simmetry point is zero it is easy, by sending the quantization of the true value we should always reobtain $d_i^{\texttt{out}}$ as the closest integer. Or is it simply OK to use $\check{d}_i^{\texttt{out}}$? ]}
%\EM{EM: [Actually since we consider fixed digraphs, then one can compute the out-degrees offline and send it to each of its out-neighbors once, before the actual iterations of the algorithm. So in that way, we take out of "equation" the quantization of the out-degree, which is anyway an integer value.]}\GO{GO:[Good idea!]}
Hence the weight $c_{ji}$ can be reconstructed exactly by agent $v_j$, by simply setting $c_{ji}=(d_i^{\texttt{out}}+1)^{-1}$.

The compact form of the proposed \ouralgorithm{} algorithm can be written as:
\begin{subequations}\label{eq:qac_az}
\begin{align}
{\bm x}_{k+1} &= {\bm x}_k + \gamma{\bm s}_k + (R-I) \check{\bm {x}}_k,\label{eq:qac_az_1}\\
{\bm s}_{k+1} &= {\bm x}_k - {\bm x}_{k+1} + {\bm s}_k + (C-I) \check{\bm s}_k\label{eq:qac_az_2}.
\end{align}
\end{subequations}

\begin{remark}
    Our proposed algorithm is a distributed average consensus variant of the Push-Pull Gradient Tracking method introduced by Song \emph{et al.}~\cite{song2022compressed}, which addresses distributed optimization with compression over general directed networks. Although the algorithm in \cite{song2022compressed} can solve the average consensus problem as a special case, it is specifically designed for optimizing global objective functions with local information at each node. As a result, it introduces unnecessary computational and communication overhead for consensus tasks, such as maintaining additional variables, performing gradient evaluations, and communicating them over the network. In contrast, our algorithm directly targets distributed average consensus using a dynamic quantization scheme that achieves arbitrarily small error with a fixed number of bits.
\end{remark}

\begin{lemma}\label{lemma:1}
    The total mass in the network (hence the state average) is preserved using the iterations in \eqref{eq:qac_az} for all $k\in\mathbb{Z}_{\geq0}$, \ie $\mathbf{1}^{\top}\big( {\bm x}_{k} + {\bm s}_{k} \big) = \mathbf{1}^{\top} {\bm x}_{0}$. 
\end{lemma}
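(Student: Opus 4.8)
The plan is to prove the claim by establishing a one-step conservation identity and then closing the argument by induction on $k$. The governing observation is that the surplus update \eqref{eq:qac_az_2} is constructed precisely so that the state increment cancels when the two vector recursions are summed, leaving only a push term that is annihilated by the column-stochasticity of $C$.

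First I would add \eqref{eq:qac_az_1} and \eqref{eq:qac_az_2}. Writing the state increment as $\bm{x}_{k+1}-\bm{x}_k = \gamma\bm{s}_k + (R-I)\check{\bm{x}}_k$ and observing that \eqref{eq:qac_az_2} contains exactly the term $\bm{x}_k-\bm{x}_{k+1}$, this increment annihilates both $\gamma\bm{s}_k$ and $(R-I)\check{\bm{x}}_k$ upon summation, leaving
\begin{equation}
\bm{x}_{k+1}+\bm{s}_{k+1} = \bm{x}_k+\bm{s}_k + (C-I)\check{\bm{s}}_k.
\end{equation}
Thus the combined state-plus-surplus vector changes from one iteration to the next only through the push term $(C-I)\check{\bm{s}}_k$.

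Next I would left-multiply by $\mathbf{1}^{\top}$ and invoke the column-stochasticity of $C$ (namely $\mathbf{1}^{\top}C=\mathbf{1}^{\top}$, as recorded in the Remark on the push weights), so that $\mathbf{1}^{\top}(C-I)=\mathbf{0}^{\top}$ and the residual term vanishes for every realization of $\check{\bm{s}}_k$. This yields the per-step invariant $\mathbf{1}^{\top}(\bm{x}_{k+1}+\bm{s}_{k+1}) = \mathbf{1}^{\top}(\bm{x}_k+\bm{s}_k)$. Induction on $k$ then gives $\mathbf{1}^{\top}(\bm{x}_k+\bm{s}_k)=\mathbf{1}^{\top}(\bm{x}_0+\bm{s}_0)$, and since the surplus is initialized at $\bm{s}_0=\mathbf{0}$, the right-hand side equals $\mathbf{1}^{\top}\bm{x}_0$, which is the desired identity.

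This argument is a bookkeeping computation rather than a deep result, so I do not expect a serious obstacle. The only points worth flagging are (i) that the cancellation of the state increment is purely algebraic and hence completely independent of the quantizer $Q(\cdot)$ and of the adaptive choices of $\Delta_k$ and $\sigma_k$ — the invariance therefore holds verbatim under the dynamic compression scheme; and (ii) that only the column-stochasticity of $C$ is used, whereas the row-stochasticity of $R$ plays no role here, since the $(R-I)\check{\bm{x}}_k$ term is eliminated before $\mathbf{1}^{\top}$ is applied.
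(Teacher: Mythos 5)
Your proposal is correct and follows essentially the same route as the paper's proof: sum the two recursions so the state increment cancels, leaving only $(C-I)\check{\bm s}_k$, which is annihilated by $\mathbf{1}^{\top}$ via column-stochasticity of $C$. The paper leaves the induction and the use of $\bm{s}_0=\mathbf{0}$ implicit, whereas you spell them out, but the core cancellation argument is identical.
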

\begin{proof}
    Consider the summation of all agents' $x$ and $s$ values at time step $k+1$, \ie $\mathbf{1}^{\top}\big( {\bm x}_{k+1} + {\bm s}_{k+1} \big)$. Then, plugging  \eqref{eq:qac_az_1} into \eqref{eq:qac_az_2} and substituting into the evolution at time step $k+1$, we get:
    \begin{align}
        \mathbf{1}^{\top}\big( {\bm x}_{k+1} + {\bm s}_{k+1} \big) \nonumber = & \mathbf{1}^{\top} {\bm x}_{k} + \mathbf{1}^{\top} {\bm s}_{k} + \mathbf{1}^{\top} C \check{{\bm s}}_{k} - \mathbf{1}^{\top} I \check{{\bm s}}_{k} \nonumber\\ = &
        \mathbf{1}^{\top} \big( {\bm x}_{k} + {\bm s}_{k} \big),
    \end{align}
    where the last equality comes from the column-stochasticity of matrix $C$. This proves the statement of the lemma.
\end{proof}

Substituting \eqref{eq:qac_az_1} into \eqref{eq:qac_az_2}, and augmenting the state by letting ${\bm z}_k = \begin{bmatrix}{\bm x}_k^{\top}, {\bm s}_k^{\top}\end{bmatrix}^{\top}\in\mathbb{R}^{2n}$ and $\check{\bm z}_k = \begin{bmatrix}\check{\bm x}_k^{\top}, \check{\bm s}_k^{\top}\end{bmatrix}^{\top}\in\mathbb{R}^{2n}$ the iterations of the proposed \ouralgorithm{} can be rewritten in the following form: 
\begin{align}\label{eq:aug_form}
{\bm z}_{k+1}  = 
\begin{bmatrix}
    I_n & \gamma I_n\\
    0 & (1-\gamma)I_n
\end{bmatrix} {\bm z}_{k} + \begin{bmatrix}
    R-I_n & 0_{n\times n}\\
    I_n-R & C-I_n
\end{bmatrix} \check{\bm z}_{k}.
\end{align}

Now, considering the quantization error we can define $\check{\bm z}_{k} \triangleq {\bm z}_k + {\bm e}_k$, to further obtain
\begin{align}\label{eq:aug_form_error}
    {\bm z}_{k+1} = (\Gamma + \Pi) {\bm z}_{k} + (\Pi - I_{2n}){\bm e}_k,
\end{align}
where $\Gamma,\Pi\in\mathbb{R}^{2n\times 2n}$, with
\begin{align}
    \Gamma = \begin{bmatrix} 0 & \gamma I_n \\ 0 & - \gamma I_n \end{bmatrix}, \quad \text{and} \quad \Pi = \begin{bmatrix} R & 0_{n\times n} \\ I_n-R & C \end{bmatrix}.
\end{align}

\begin{theorem}\label{theorem:1}
    The push-pull quantized average consensus algorithm in \eqref{eq:qac_az} achieves average consensus, \ie ${\bm x}_{k} \rightarrow x_{\text{ave}}\mathbf{1}$ and ${\bm s}_{k} \rightarrow \mathbf{0}$ as $k\rightarrow\infty$, with a properly chosen surplus gain $\gamma>0$ and sufficiently small zooming factor $\alpha>0$.
\end{theorem}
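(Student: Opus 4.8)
The plan is to rewrite the dynamics in the augmented error form \eqref{eq:aug_form_error}, $\bm{z}_{k+1} = (\Gamma+\Pi)\bm{z}_k + (\Pi - I_{2n})\bm{e}_k$, and to decouple an invariant consensus mode from a strictly contracting disagreement mode driven only by the (vanishing) quantization error $\bm{e}_k$. First I would fix the spectrum of the nominal matrix $\Gamma+\Pi$. A direct check gives $(\Gamma+\Pi)\begin{bmatrix}\mathbf{1}\\ \mathbf{0}\end{bmatrix} = \begin{bmatrix}\mathbf{1}\\ \mathbf{0}\end{bmatrix}$ from $R\mathbf{1}=\mathbf{1}$, and $[\mathbf{1}^\top,\mathbf{1}^\top](\Gamma+\Pi) = [\mathbf{1}^\top,\mathbf{1}^\top]$ from $\mathbf{1}^\top C = \mathbf{1}^\top$; hence $1$ is an eigenvalue with right eigenvector $[\mathbf{1}^\top,\mathbf{0}^\top]^\top$ and left eigenvector $[\mathbf{1}^\top,\mathbf{1}^\top]$. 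The nominal goal is then to prove that for sufficiently small $\gamma>0$ this eigenvalue is simple and all remaining eigenvalues lie strictly inside the unit disk.

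I would obtain this by perturbing around $\gamma=0$, where $\Gamma+\Pi$ collapses to the block-triangular $\Pi$. Since strong connectivity together with the self-loops $\varepsilon_{jj}$ makes both $R$ and $C$ primitive, each has a simple eigenvalue at $1$ and all others inside the open unit disk, so $\Pi$ carries a semisimple eigenvalue $1$ of multiplicity two, with right eigenvectors $[\mathbf{1}^\top,\mathbf{0}^\top]^\top$ and $[\mathbf{0}^\top,\mathbf{r}^\top]^\top$ (where $C\mathbf{r}=\mathbf{r}$) and left eigenvectors spanned by $[\boldsymbol{\pi}^\top,\mathbf{0}^\top]$ (where $\boldsymbol{\pi}^\top R=\boldsymbol{\pi}^\top$) and $[\mathbf{1}^\top,\mathbf{1}^\top]$. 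Writing $\Gamma=\gamma E$ with $E=\left[\begin{smallmatrix}0 & I_n\\ 0 & -I_n\end{smallmatrix}\right]$ and computing the $2\times2$ first-order splitting matrix on this eigenspace in a biorthonormal basis, I expect eigenvalues $0$ and $-\tfrac{n}{\mathbf{1}^\top\mathbf{r}}\,\boldsymbol{\pi}^\top\mathbf{r}$; the former keeps the consensus eigenvalue pinned at $1$, while the latter is strictly negative (as $\boldsymbol{\pi},\mathbf{r}>0$) and pushes the second eigenvalue into the disk for small $\gamma>0$. By continuity the eigenvalues already inside the disk remain there, so there is $\gamma^\ast>0$ such that for $0<\gamma<\gamma^\ast$ the restriction of $\Gamma+\Pi$ to the complement $W$ of the consensus line is a strict contraction, $\|(\Gamma+\Pi)\tilde{\bm{z}}\|_\ast\le\rho\|\tilde{\bm{z}}\|_\ast$ with $\rho<1$ in a suitable norm.

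Next I would project with the oblique projector $P=\tfrac1n[\mathbf{1}^\top,\mathbf{0}^\top]^\top[\mathbf{1}^\top,\mathbf{1}^\top]$ onto the consensus line, writing $\bm{z}_k = x_{\text{ave}}[\mathbf{1}^\top,\mathbf{0}^\top]^\top + \tilde{\bm{z}}_k$. By Lemma~\ref{lemma:1} the consensus component is frozen at $x_{\text{ave}}$, and because $[\mathbf{1}^\top,\mathbf{1}^\top](\Pi-I_{2n})=\mathbf{0}$ the error is annihilated by the left eigenvector, so $\bm{e}_k$ never moves the consensus value and feeds only the disagreement, $\tilde{\bm{z}}_{k+1}=(\Gamma+\Pi)\tilde{\bm{z}}_k+(\Pi-I_{2n})\bm{e}_k$, whence $\|\tilde{\bm{z}}_{k+1}\|_\ast\le\rho\|\tilde{\bm{z}}_k\|_\ast + c\|\bm{e}_k\|_\infty$. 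It then remains to show that the adaptive scheme forces $\|\bm{e}_k\|_\infty\to0$, after which the standard contraction-with-vanishing-input estimate delivers $\tilde{\bm{z}}_k\to\mathbf{0}$, i.e. $\bm{x}_k\to x_{\text{ave}}\mathbf{1}$ and $\bm{s}_k\to\mathbf{0}$.

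The decisive step is the zoom/midpoint feedback over the length-$\bar{D}$ epochs indexed by $\set{K}_{\bar{D}}$, which I would handle in three stages. (i) As long as any state leaves the range, the max-consensus on $\zeta_{j,k}$ gives $\zeta_k=1$ and \eqref{eq:stepsize} inflates $\Delta_k$ by $(1+\alpha)$, so after finitely many epochs saturation ceases and the in-range bound $\|\bm{e}_k\|_\infty\le\Delta_k/2$ applies. (ii) With saturation excluded, the contraction shrinks the state spread while the midpoint $\sigma_k=\tfrac12(M_k+\mu_k)$ recenters the quantizer on the cluster, which by mass conservation tends to $x_{\text{ave}}$; the states therefore move well inside the range, triggering $\zeta_k=-1$ and a zoom-in $\Delta_{k+1}=\Delta_k/(1+\alpha)$. (iii) Choosing $\alpha$ small enough that the per-epoch zoom-in factor dominates the contraction, $1/(1+\alpha)>\rho^{\bar{D}}$, keeps $\Delta_k$ asymptotically comparable to the contracted disagreement rather than dominated by it, so the zoom-in condition recurs indefinitely and $\Delta_k\to0$ geometrically, giving $\|\bm{e}_k\|_\infty\to0$. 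I expect this coupled argument to be the main obstacle: one must certify that the zoom decisions, taken only at the epoch boundaries, stay consistent with the continuous $\bar{D}$-step contraction, so that $\Delta_k$ tracks the shrinking disagreement downward without the states ever leaving the simultaneously shrinking range and re-triggering zoom-out. The bit budget $b$ enters precisely here, since the range $\bar{x}_k=(2^{b-1}-\tfrac12)\Delta_k$ must retain enough headroom over the residual spread for the zoom-in to be self-sustaining.
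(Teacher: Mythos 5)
Your proposal follows essentially the same route as the paper's own argument: the paper's proof is only a sketch, and it rests on exactly the facts you establish — the dominant eigenvalue $1$ of $\Gamma+\Pi$ with left eigenvector $[\mathbf{1}^\top,\mathbf{1}^\top]$ annihilating the error input $(\Pi-I_{2n})\bm{e}_k$, all remaining eigenvalues strictly inside the unit circle for suitable $\gamma$, and convergence of the resulting linear system under a vanishing-input perturbation. If anything, your write-up is more complete than the paper's: the first-order perturbation analysis around $\gamma=0$ and the epoch-wise condition $1/(1+\alpha)>\rho^{\bar{D}}$ give substance to the clauses ``properly chosen $\gamma$'' and ``sufficiently small $\alpha$'' that the paper's sketch leaves unexplained. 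Note, however, that the step you flag as the main obstacle — proving that the coordinated zooming and midpoint shifting actually force $\lVert\bm{e}_k\rVert_\infty\to 0$ rather than merely remain bounded — is also left unproven in the paper, which simply asserts it is ``ensured due to the zooming and shifting mechanisms''; your three-stage induction coupling the $\bar{D}$-step contraction with the zoom decisions is precisely what a complete proof would still have to carry out.
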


\begin{proof}[\textbf{Sketch of the proof of Theorem~\ref{theorem:1}}]\label{sec:appendix}
We analyze the convergence of the iteration by examining the behavior of the quantization error vector ${\bm e}_k$. In the ideal case where ${\bm e}_k = \mathbf{0}$ for all $k$, the update reduces to a known linear iteration from \cite{cai2012average}, which converges to the correct consensus value. For the case of constant quantization error, we show that the system evolves as a linear time-invariant system with input $(\Pi - I){\bm e}$, and the fixed point is well-defined due to the spectral properties of the matrix $(\Gamma + \Pi)$. Specifically, its dominant eigenvalue is $1$, with a corresponding left eigenvector orthogonal to the error input, and all other eigenvalues lie strictly inside the unit circle. This structure ensures convergence to a unique fixed point. Finally, when the quantization error ${\bm e}_k$ is time-varying but bounded, which is ensured due to the zooming and shifting mechanisms, the system's dynamics are perturbed by a vanishing input. This ensures that the error vanishes asymptotically, and the iterates converge to the correct consensus value. The convergence is guaranteed by mass preservation and the design of the quantization parameters.
%\begin{proof}%[Sketch of the proof]
%See~Appendix~\ref{sec:appendix}.
\end{proof}

% The evolution 
% \begin{align}
%     {\bm z}_1 &= (\Gamma + \Pi ) {\bm z}_0 + (\Pi-I){\bm e}_0 \\
%     {\bm z}_2 &= (\Gamma + \Pi ) {\bm z}_1 + (\Pi-I){\bm e}_1 \\
%     &= (\Gamma + \Pi )^2 {\bm z}_0 + (\Gamma+\Pi)(\Pi-I){\bm e}_0 + (\Pi-I){\bm e}_1
% \end{align}

\section{Simulation Results}
\label{sec:simulations}

In this section, we illustrate the behavior of the proposed distributed algorithm and we showcase its performance. Throughout the simulations, we consider a fixed directional network of $n=5$ agents, which we model as a directed graph $\set{G}=(\set{V},\set{E})$, of diameter $D=4$. Under this setup, each agent $v_j\in\set{V}$ knows an upper bound on the diameter of the network, ${\bar{D}}=4$, and initializes its local variables with $\Delta_0=0$, $\sigma_0=0$, $s_{j,k}=0$, $\gamma=0.2$, and $\alpha\in\{1.2,5\}$. Note that, we omit the agents' index $j$ for the variables that are always the same between agents due to either the coordination mechanisms we described in the previous sections, \ie $\Delta_k$ and $\sigma_k,$ and based on the assumption that agents have global information of the parameters $\alpha, \gamma,$ and ${\bar{D}}$, apriori. The initial state variable of agent $v_j$, $x_{j,0}$, is assumed to be randomly initialized from a uniform distribution in the interval $[0,1000]$. In general, to avoid saturation and ensure full coverage of the initial values that $x_{j,0}$ can take between the specified interval, then, agents need at least $10$ bits, \ie $2^{10}>1000$. For a lower number of bits it is not guaranteed that values will be representable, and hence, some agents may experience saturation on their quantized values. 
Although the methods in the literature proposed solutions where the number of bits for quantization is enough to represent the information that is to be exchanged, the \ouralgorithm{} algorithm of this work can handle cases where the number of quantization bits are not enough to fully represent a real value, in the expense of more communication rounds. In what follows, we illustrate the evolution of the values of agents in the network, and we show the convergence speed of the agents for different settings. 

In Fig.~\!\ref{fig:alpha_small} and Fig.~\!\ref{fig:alpha_large} we present the evolution of the values of the agents executing the proposed algorithm, over the network and initial values we described before, and we examine the average consensus error captured by $\lVert {\bm x}_k - \mathbf{1}x_{\text{ave}}\rVert_2$, for $\alpha=1.2$ and $\alpha=5$, respectively. As shown in these figures, agents converge faster to the average consensus value when $\alpha=5$ rather than $\alpha=1.2$. At the first rounds of communication for $b\in\{3,8\}$, agents increase their quantization step sizes $\Delta_k$ every ${\bar{D}}=4$ iterations, since the values of their state variables $x_{j,k}$ cannot be fully represented by $b\in\{3,8\}$ bits. Hence, they increase their step size $\Delta_k$, and shift their midpoint $\sigma_k$ in a distributed and coordinated manner, up to the point where the values they want to send to their out-neighbors are within their quantization range. Clearly, for $b=24$, agents start by zooming-in their quantizers (by decreasing the quantization step size) since with $24$ bits, all their values can be represented adequately. Notice that, for a relatively lower number of bits and higher value of $\alpha$, the quantizers' step size fluctuates since the zooming factor $\alpha$ is too big for the given number of bits.  

\begin{figure}
    \centering
    \includegraphics[width=0.94\linewidth]{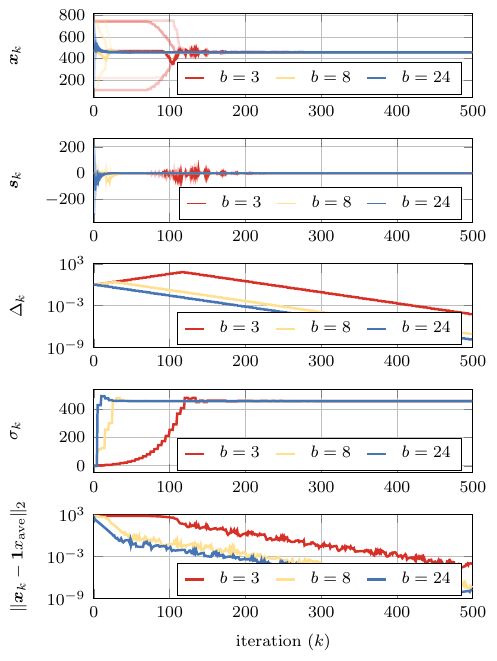}
    \vspace{-10pt}
    \caption{Evolution of the \ouralgorithm{} algorithm over a directed network of $n=5$ agents with $D=4$ and $\alpha=1.2$.}
    \label{fig:alpha_small}
\end{figure}

\begin{figure}
    \centering
    \includegraphics[width=0.94\linewidth]{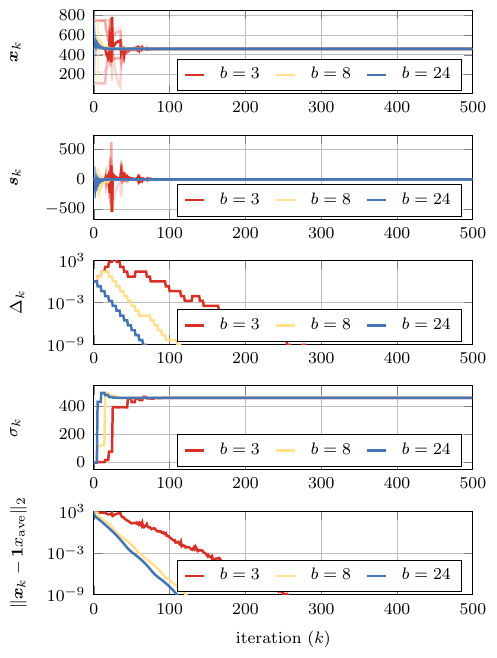}
    \vspace{-10pt}
    \caption{Evolution of the \ouralgorithm{} algorithm over a directed network of $n=5$ agents with $D=4$ and $
    \alpha=5$.}
    \label{fig:alpha_large}
\end{figure}

\subsection*{Convergence Time - Preliminary Numerical Analysis}

%We now analyze the performance of the \ouralgorithm  algorithm in terms of average convergence time and the number of communication/quantization bits used. Specifically, we conduct $50$ simulations on networks of the same size $n=5$ but with varying connectivity. The differences in connectivity correspond to variations in the network's diameter $D$, which is assumed to be known by all agents. To capture the convergence speed of the \ouralgorithm algorithm, we consider such networks with diameter $D=2$ and $D=4$, as shown in Fig.~\!\ref{fig:bits_convergence_net1} and Fig.~\!\ref{fig:bits_convergence_net2}, respectively. In these simulations, we try different quantizer zooming factors $\alpha\in\{1.2,1.3,1.5,1.8,2,10\}$ and quantization bits $b\in\{2,4,6,8,10,12\}$ for all agents, and record the average convergence time, \ie $\min\big\{k \mid \lvert x_{i,k}-x_{j,k}\rvert\leq 10^{-8}, \forall v_j, v_i \in \mathcal{V}\big\}$.

We now analyze the performance of the \ouralgorithm{}  algorithm in terms of average convergence time and the number of bits used for quantization. Specifically, we conduct $50$ simulations on a network with $n=5$ agents, and upper bound on the network diameter ${\bar{D}}=4$, as shown in Fig.~\!\ref{fig:bits_convergence_net1}. In these simulations, we try different quantizer zooming factors $\alpha\in\{1.2,1.3,1.5,1.8,2,10\}$ and quantization bits $b\in\{2,4,6,8,10,12\}$ for all agents, and record the average convergence time, \ie $\min\big\{k \mid \lvert x_{i,k}-x_{j,k}\rvert\leq 10^{-8}, \forall v_j, v_i \in \mathcal{V}\big\}$.  
%
% The average convergence rate of networks with $D=4$ is significantly slower than that of networks with $D=2$, primarily due to the more frequent adjustments in quantization step size and midpoint shifting. However, notice that, selecting a higher quantization zooming factor, such as $\alpha=10$, leads to the saturation of the \ouralgorithm algorithm, preventing further improvement in quantizer precision. This saturation arises from excessive fluctuations in the quantization step size, which inhibit the zooming mechanism from effectively reducing it. Thus, in these cases the algorithm cannot reach $\lvert x_{i,k}-x_{j,k}\rvert\leq 10^{-8}, \forall v_j, v_i \in \mathcal{V}$, and hence the average convergence time is not shown in Fig.~\!\ref{fig:bits_convergence_net1} for $b\in\{2,4,6\}$ and in Fig.~\!\ref{fig:bits_convergence_net2} for $b=2$.
%
Observe that, by increasing the number of bits for communication, the convergence time of the \ouralgorithm{} algorithm decreases significantly, especially with smaller zooming factor $\alpha$. However, notice that, selecting a higher quantization zooming factor, such as $\alpha=10$, leads to the saturation of the \ouralgorithm{} algorithm, preventing further improvement in quantizer precision. This saturation arises from excessive fluctuations in the quantization step size, which inhibit the zooming mechanism from effectively reducing it. Thus, in these cases the algorithm cannot reach $\lvert x_{i,k}-x_{j,k}\rvert\leq 10^{-8}, \forall v_j, v_i \in \mathcal{V}$, and hence the average convergence time is not shown in Fig.~\!\ref{fig:bits_convergence_net1} for $b\in\{2,4,6\}$.

\begin{figure}[t]
    \centering
    \includegraphics[width=0.9975\linewidth]{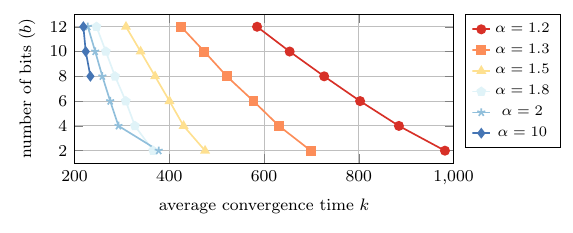}
    \vspace{-25pt}
    \caption{Average convergence time of the \ouralgorithm{} algorithm, \ie $\min\{k \mid \lvert x_{i,k}-x_{j,k}\rvert\leq 10^{-8}, \forall v_j, v_i \in \mathcal{V}\}$  in a $5$-agent directed network with average diameter $D=4$ and $\gamma=0.2$.}
    \label{fig:bits_convergence_net1}
\end{figure}

% \begin{figure}
%     \centering
%     \includegraphics[width=0.9975\linewidth]{bits_convergence_net2.pdf}
%     \vspace{-25pt}
%     \caption{Average convergence time of the \ouralgorithm algorithm, \ie $\min\{k \mid \lvert x_{i,k}-x_{j,k}\rvert\leq 10^{-8}, \forall v_j, v_i \in \mathcal{V}\}$  in a $5$-agent directed network with average diameter $D=2$ and $\gamma=0.2$.}
%     \label{fig:bits_convergence_net2}
% \end{figure}

\section{Conclusions and Future Directions}

%\subsection{Conclusions}

In this paper, we addressed the problem of achieving quantized average consensus in possibly unbalanced directed graphs while operating under limited bandwidth constraints. Existing approaches largely rely on probabilistic convergence, suffer from residual quantization errors, or require unbounded message sizes, making them impractical for many real-world applications. To overcome these limitations, we proposed a novel distributed average consensus algorithm, herein called the Push-Pull Average Consensus algorithm with Dynamic Compression ({\ouralgorithm}) algorithm, that incorporates an adaptive quantization mechanism. Our approach enables agents to deterministically reach consensus on the exact average without requiring global knowledge of the network. We also provided a preliminary numerical convergence analysis and demonstrated the effectiveness of our method through simulations.

%\subsection{Future Directions}

Despite the promising results, several avenues remain open for future research. One key direction is to develop distributed stopping mechanisms with which agents can stop communicating provided the network has reached consensus within a predefined distance from the exact average. Finally, experimental validation on real-world distributed systems would help assess the algorithm’s robustness in practical scenarios.

\bibliographystyle{IEEEtran}
\bibliography{references}
%\printbibliography % list of references is printed here

\end{document}